\newtheorem{theorem}{Theorem}
\newtheorem{lemma}[theorem]{Lemma}
\newtheorem{proposition}[theorem]{Proposition}
\newtheorem{corollary}[theorem]{Corollary}
\begin{document}

\begin{flushleft}
{\LARGE \bf
{Deciphering Interactions in Causal Networks without}

{Parametric Assumptions}
}

\bigskip

Yang Zhang$^1$, Mingzhou Song$^{1*}$

\bigskip

{\bf{1}} Department of Computer Science, New Mexico State University, Las Cruces, NM 88003, USA

* To whom correspondence may be addressed: joemsong@cs.nmsu.edu

\end{flushleft}

\bigskip

\section*{Abstract}

With the assumption that the effect is a mathematical function of the cause in a causal relationship, \textsc{FunChisq}, a chi-square test defined on a non-parametric representation of interactions, infers network topology considering both interaction directionality and nonlinearity.  Here we show that both experimental and in silico biological network data suggest the importance of directionality as evidence for causality.  Counter-intuitively, patterns in those interactions effectively revealed by \textsc{FunChisq} enlist an experimental design principle essential to network inference -- perturbations to a biological system shall make it transits between linear and nonlinear working zones, instead of operating only in a linear working zone.

\section{Introduction}

We contemplate that nonlinear dynamics can be exploited to improve identifiability of the network dependency structure among random variables in a largely unknown system.  This is in contrast to the common belief that linear dynamics is more revealing about the underlying system than nonlinear dynamics.  This perspective seems to be particularly relevant to study biological systems where measurements are sparse and noisy but dynamics can be highly nonlinear.  We explore this opportunity for network inference from observed dynamic or perturbed data of a biological system, by promoting nonlinear functional relationships among variables in the system.  Our problem is to determine the statistical strength of non-constant function $f: X \to Y$ from the observed data without parametric assumptions about $f$.

Although computational methods seeking evidence to support inference of causal biological networks from large-scale omics data have been an active pursuit over the past two decades, the performance on real biological systems has remained poor and is sometimes not much better than random guessing \citep{Marbach2012}.  This status seems to be a consequence of non-ideal interplay between experimental design and network inference methods.  Functional dependencies among random variables can suggest causality from the independent variable to the dependent variable.  Continuous linear, switch-like, sigmoidal functions, and copula \citep{Kim2008} have been used in regression analysis.  However, due to both biological complexity and variability in experimental data, the usage of specific parametric functions can often be difficult to justify in advance for not well-understood biological systems.  Testing for many possible parametric forms can be either computationally inefficient or statistically ineffective.  To overcome such challenges, we collect evidence for causality of a potential interaction from its contingency table, which is a non-parametric representation capable of qualitatively approximating any function to a sufficient accuracy under strong data uncertainty -- prevalent in biological experiments.  Although many statistical measures are applicable to contingency tables, including mutual information / G-test, various correlation coefficients, joint or conditional likelihood, Bayesian information criterion, and Pearson's chi-square test, they are designed for associations and not generally sensitive to functional relationships.  

We have previously developed generalized logical network inference using the classic Pearson's chi-square test of independence, which can detect causal nonlinear interactions when observed temporal data are available \citep{Song2009}.  Extending our prior work, here we present a method called \textsc{FunChisq} to infer network topology by detecting functional dependencies among random variables aiming at interaction directionality and nonlinearity.  Empirical evidence for the effectiveness of \textsc{FunChisq} is gained from the DREAM5 Challenges that assess reverse engineering methods for biological networks.  

The accumulating evidence on the effectiveness of \textsc{FunChisq} reveals an interesting systems experimental design principle -- to sample linear-to-nonlinear transition with sufficient detail in addition to the normal state and an extremely perturbed state, because such sampling can allow nonlinear functional relationships to be employed for insights impossible with linear ones.  For example, in addition to the wild type and a homozygous gene mutant, it is desirable to include a heterozygous mutant in the experimental design for a functional test approach to catch the linear-to-nonlinear transition dynamics that may reveal causality.

\section{Methods}

\subsection{The non-constant functional chi-square test}

Our goal is to test from an observed contingency table whether discrete random variable $Y$ is a non-constant function $f$ of discrete random variable $X$.  We represent the potential function $f$ using a contingency table, where we let $X$ be the row variable and $Y$ be the column variable representing the potential cause and effect, respectively.  $X$ is formed by a combination of multiple discrete variables which we call parents; we also call $Y$ the child.  We decide if $Y=f(X)$ is statistically supported by the data for some non-constant function $f$.

We represent an observed $r\times s$ contingency table as matrix $[n_{ij}]$ $(n_{ij}\ge 0)$.  Let $n_{i\cdot}$ be the sum of observations in row $i$ and $n_{\cdot j}$ sum of observations in column $j$.  Let $n$ be the total number of observations.  We define the functional chi-square statistic by
\begin{align}
\chi^2(f:X\to Y) & = \sum_i \chi^2(Y|X=i) - \chi^2(Y) \\
 & = \sum_{i=1}^r \sum_{j=1}^s \frac{(n_{ij}-n_{i\cdot}/{s})^2}{{n_{i\cdot}}/{s}} - \sum_{j=1}^s \frac{(n_{\cdot j}-n/s)^2}{n/s} 
 \label{eq:FCdef}
\end{align}
Each row chi-square $\chi^2(Y|X=i)$ represents deviation of $Y$ from a uniform distribution conditioned on the given $X=i$.  The child chi-square $\chi^2(Y)$ represents the deviation of $Y$ from a uniform distribution not contributed by $X$.  The difference is the deviation of $Y$ from a uniform distribution explainable by $X$.

\subsection{Properties}

We summarize the following properties for the functional chi-square test defined above:
\begin{description}

\item[Zeros:] $\chi^2(f:X\to Y)$ is zero if the empirical joint distribution of $X$ and $Y$ can be factorized as $\hat{P}(X,Y)=\hat{P}(X)\hat{P}(Y)$, or $X$ and $Y$ are empirically statistically independent $\hat{P}(Y|X)=\hat{P}(Y)$.  This is proved as Proposition~\ref{prop:zeros} in Appendix~\ref{AppZeros}.

Both a constant function $Y=f(X)=c$ and a conditional uniform distribution $P(Y|X)=\frac{1}{s}$ are zeros of the statistic, which are desirable as neither provides evidence for causality.  

These zeros of functional chi-square are also zeros of the Pearson's chi-square statistic.

\item[Non-negativity:] The functional chi-square is non-negative for any given contingency table, justified by Corollary~\ref{cor:nn} in Appendix~\ref{AppNull}, mathematically true for any sample size (including asymptotically).

\item[Asymmetry:] The functional chi-square test is asymmetric in $X$ and $Y$, i.e., $\chi^2(f:X\to Y) \neq \chi^2(f:Y \to X)$.  It thus does not give the same test statistics if we rotate the row and column into another contingency matrix.  This is demonstrated in the example in Fig.~\ref{fig:ex}.

\item[Asymptotics:] Under the null hypothesis of $Y$ being uniformly distributed conditioned on $X$, the functional chi-square statistic asymptotically follows a chi-square distribution with $(s-1)(r-1)$ degrees of freedom.  This result is given as Theorem~\ref{funchisq} in Appendix~{\ref{AppNull}}.

\item[Optimality:] A contingency table with a given column marginal distribution maximizes $\chi^2(f:X\to Y)$ if and only if column variable $Y$ is a function of row variable $X$ when such a contingency table is feasible.  This is established as Theorem~\ref{Opt} in Appendix~\ref{AppOpt}.

\end{description}

\subsection{An example}

The example in Fig.~\ref{fig:ex} demonstrates that the functional chi-square statistic promotes functional relationships but demotes otherwise.
\begin{figure}[htbp]
\centering
\textbf{\textsf{a}} \;
\scalebox{0.9}{
\begin{tabular}{cccc}
\toprule
     	& \#$(Y=1)$ & \#$(Y=2)$ & \#$(Y=3)$\\
\midrule
$X=1$ &  5 & 1 & 1 \\
$X=2$ &  1 & 5 & 0 \\
$X=3$ &  5 & 1 & 1 \\
\midrule
\multicolumn{4}{c}{$\chi^2(f:X\to Y) = 10.04$, df=4, $p$-value = $0.040$}\\
\bottomrule
\end{tabular}
}
\hspace{.125in}
\textbf{\textsf{b}} \;
\scalebox{0.9}{
\begin{tabular}{cccc}
\toprule
     	& \#$(X=1)$ & \#$(X=2)$ & \#$(X=3)$ \\
\midrule
$Y=1$ &  5 & 1 & 5 \\
$Y=2$ &  1 & 5 & 1 \\
$Y=3$ &  1 & 0 & 1 \\
\midrule
\multicolumn{4}{c}{$\chi^2(f:Y\to X) = 8.38$, df=4, $p$-value = $0.079$}\\
\bottomrule
\end{tabular}
}
\caption{{\bf Functional chi-squares are more sensitive to functional relationships than Pearson's chi-squares.}  The two tables are identical except with switched columns and rows.  They have an equal Pearson's chi-square statistic of 8.87 with 4 degrees of freedom and a $p$-value of 0.064, but very different functional chi-squares.  \textbf{\textsf{(a)}} A strong $X\to Y$ functional relationship has a high functional chi-square. \textbf{\textsf{(b)}} A weak $Y\to X$ functional relationship has a low functional chi-square.}
\label{fig:ex}
\end{figure}

\section{Results}

\subsection{Evaluation of \textsc{FunChisq} on DREAM5 Challenges in silico data set}

We first evaluated \textsc{FunChisq} on the in silico data set used in the network inference challenge of the DREAM5 Challenges \citep{Marbach2012}.  This data set was generated with a computer model of yeast transcription regulation and the groundtruth network is unambiguously defined.  We used data from the subnetwork that includes only the given 195 transcription factors.  This setup focuses us to evaluate how an algorithm handles bi-directionality.  The maximum number of quantization levels was set to 9 and we only evaluated pair-wise interactions as the experiments did not involve combinatorial perturbations.  Then we compared the performance of \textsc{FunChisq} with Pearson's chi-square, Pearson's correlation, and ANOVA.  Figure~\ref{fig:DREAM5insilico} shows the ROC curve and the precision-recall curve for each of the method along with the area under the curve.  \textsc{FunChisq} was a clear winner and achieved statistically significantly higher areas under both curves.
\begin{figure}[htbp]
\centering
\textbf{\textsf{a}} \includegraphics[width=.45\linewidth]{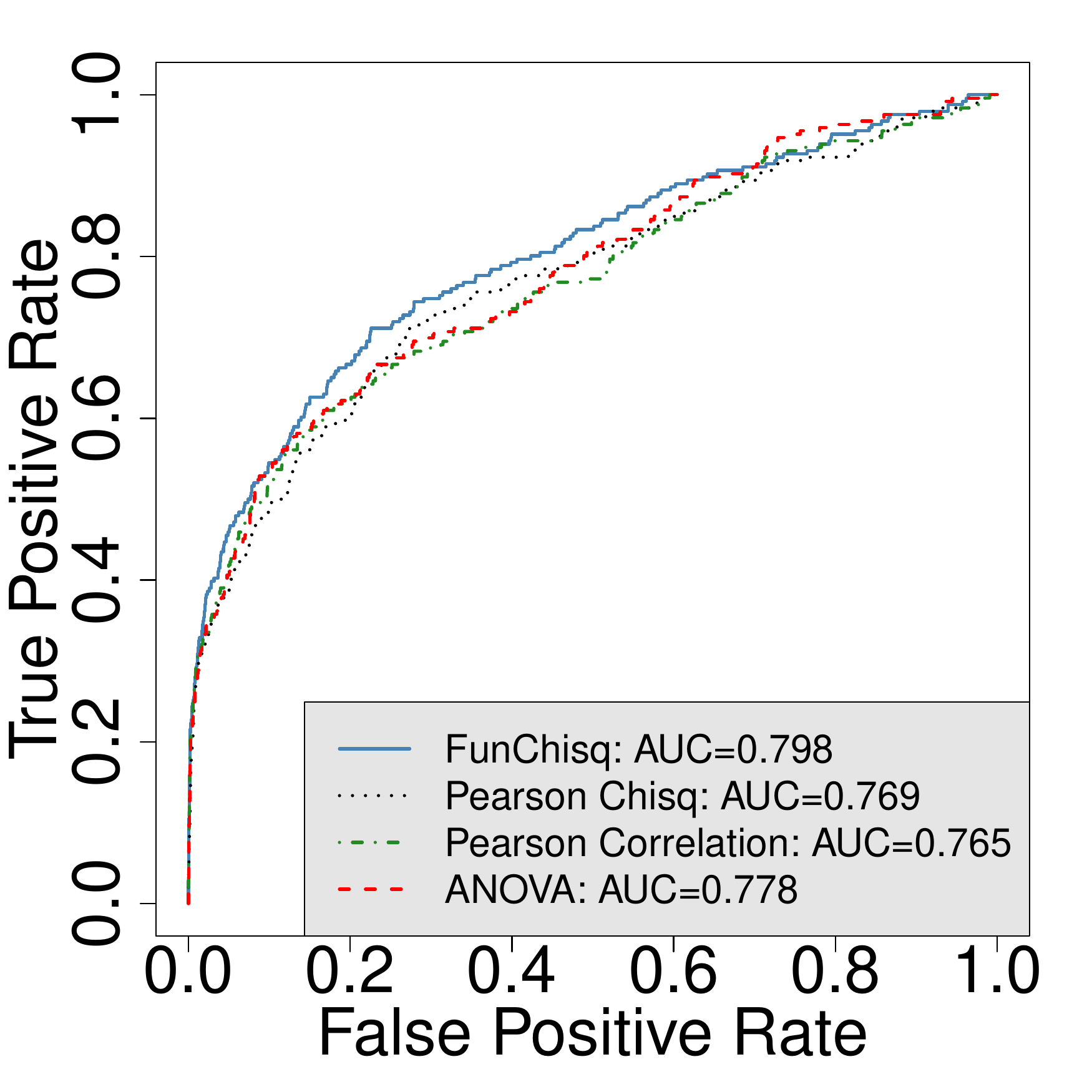}
\hspace{20pt}
\textbf{\textsf{b}} \includegraphics[width=.45\linewidth]{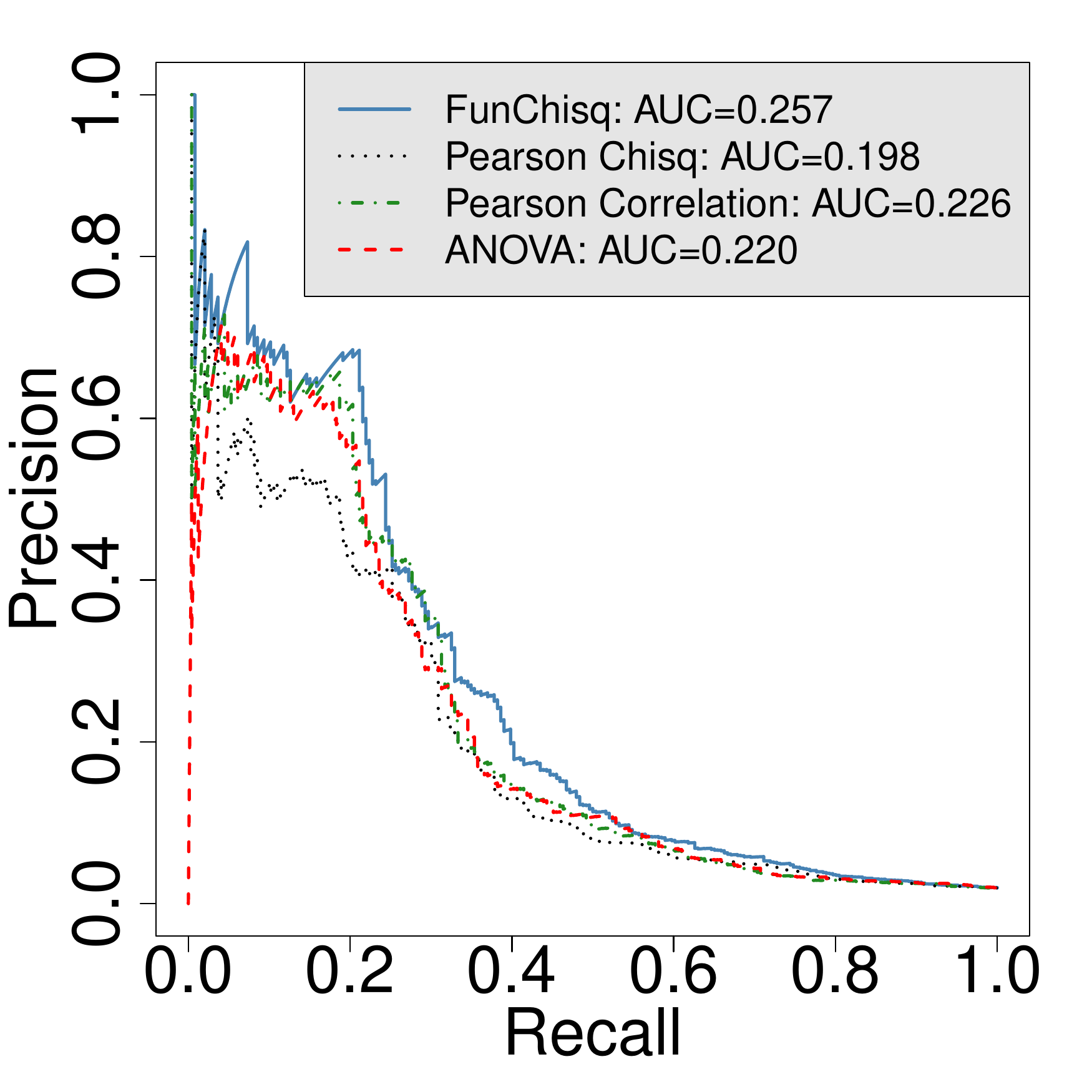}
  \caption{{\bf Advantage of functional tests on DREAM5 in silico network data.}  We compared \textsc{FunChisq} with three other methods including ANOVA (functional),  Pearson's chi-square test (non-functional), and Pearson's correlation test (non-functional).  The areas under the curve (AUCs) are given in the legend.  \textbf{\textsf{(a)}} The receiver operating curves.  The AUC of \textsc{FunChisq} is statistically significantly greater than the other three methods ($t$-test $p$-value=0.0019).  Additionally, the functional tests (\textsc{FunChisq} and ANOVA) performed better than the non-functional tests.  \textbf{\textsf{(b)}} The precision-recall curves.  AUC of \textsc{FunChisq} still performed the best, statistically significantly ($t$-test $p$-value=0.038).  }   
\label{fig:DREAM5insilico}
\end{figure}
Figure~\ref{fig:D5N1} highlights the types of interaction whose directionality is strongly discriminated by \textsc{FunChisq} from the DREAM5 in silico data set.  Specifically, there is a notable difference between functional chi-square values when the two genes switch position in each interaction, such that the direction that is more functional is promoted.
\begin{figure}[htbp]
\centering
\textbf{\textsf{a}} \; \includegraphics[width=.45\linewidth]{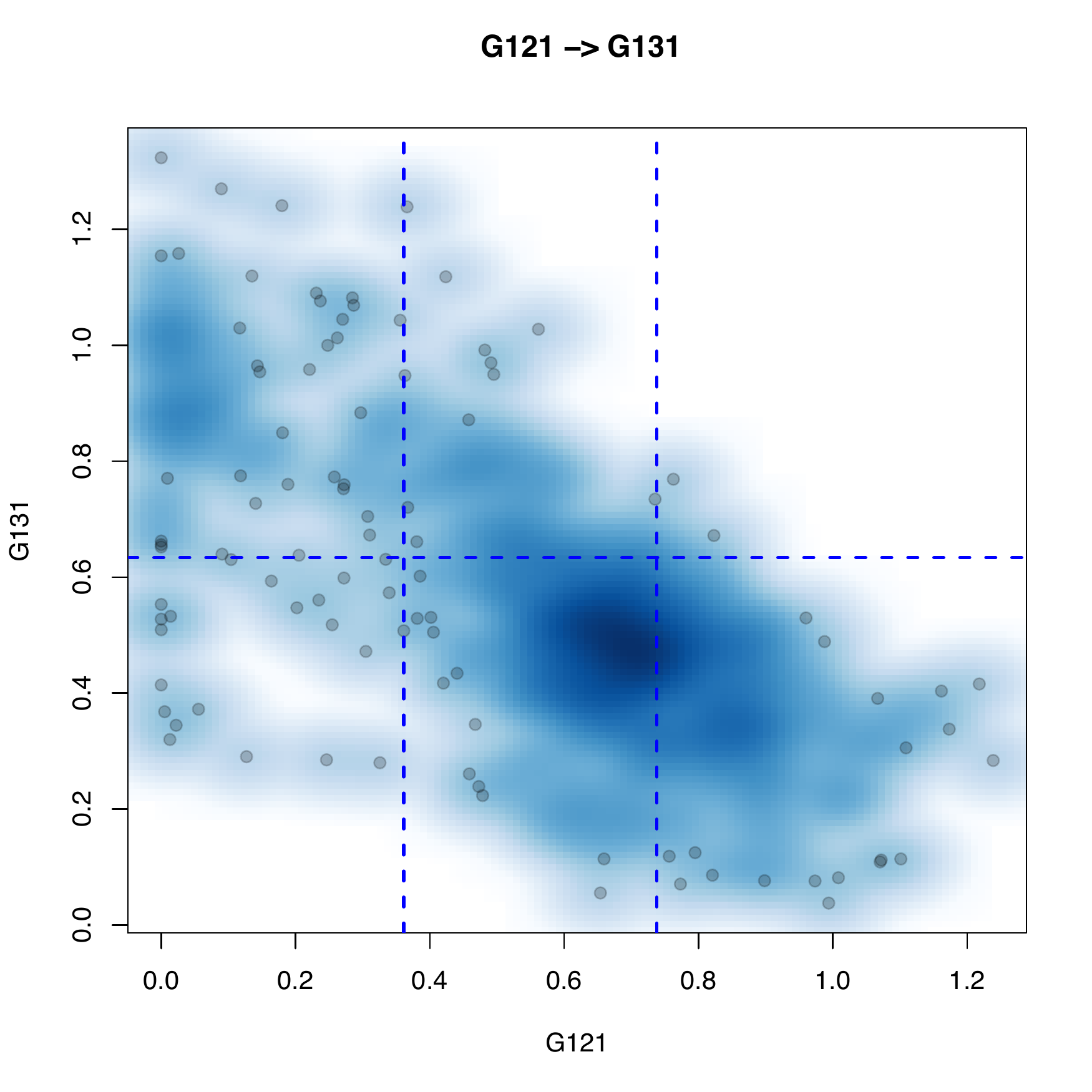}
\textbf{\textsf{b}} \; \includegraphics[width=.45\linewidth]{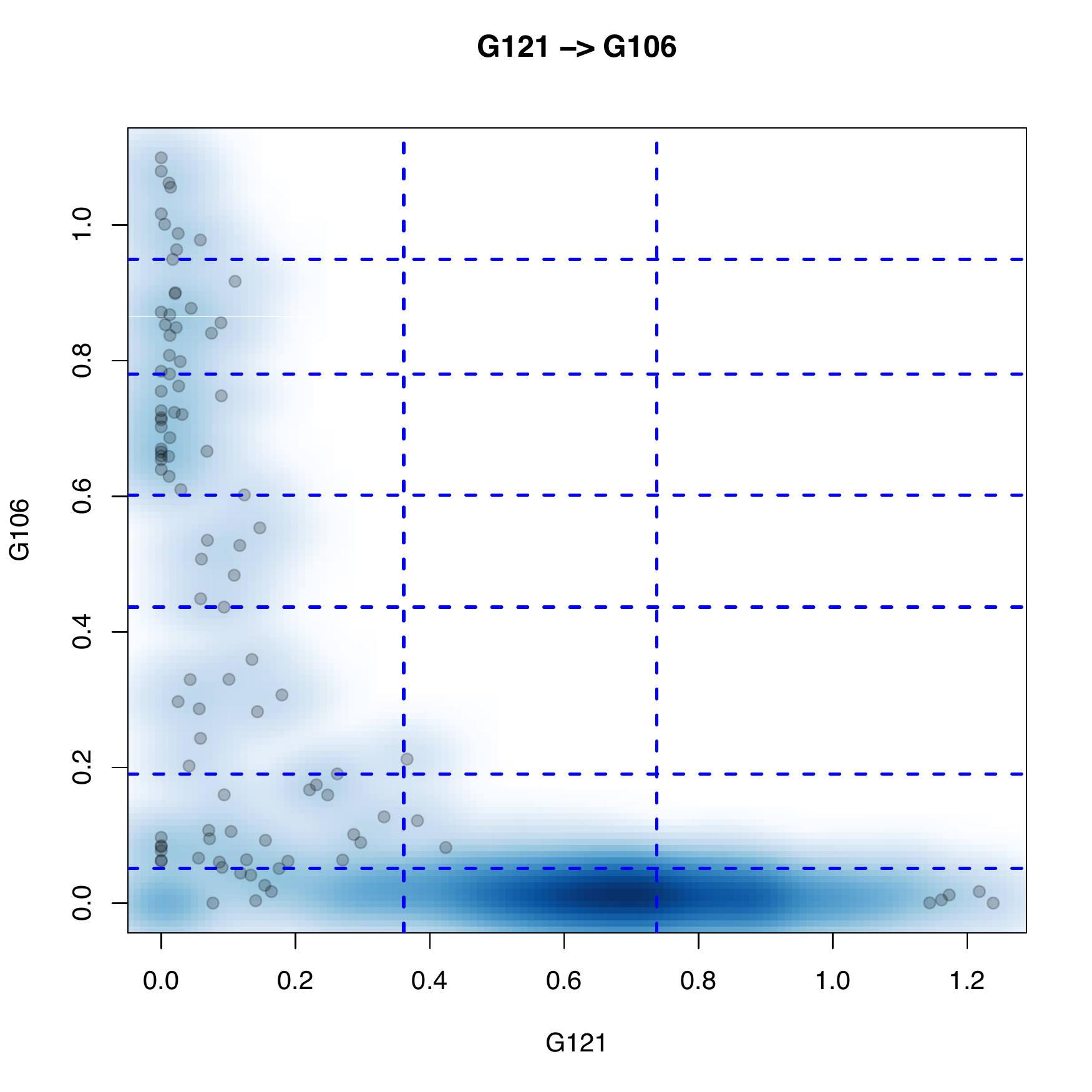}\\
\textbf{\textsf{c}} \; \includegraphics[width=.45\linewidth]{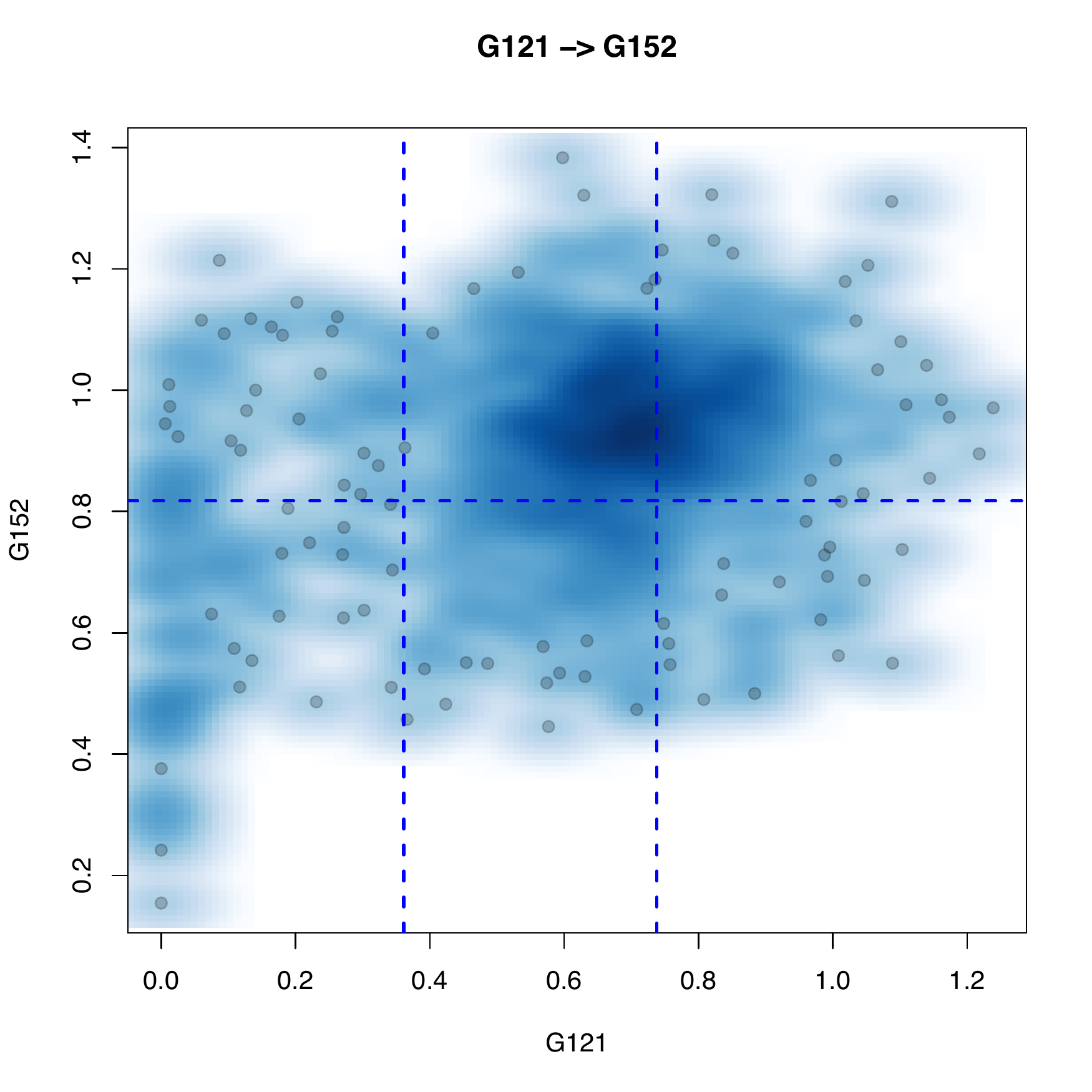}
\textbf{\textsf{d}} \; \includegraphics[width=.45\linewidth]{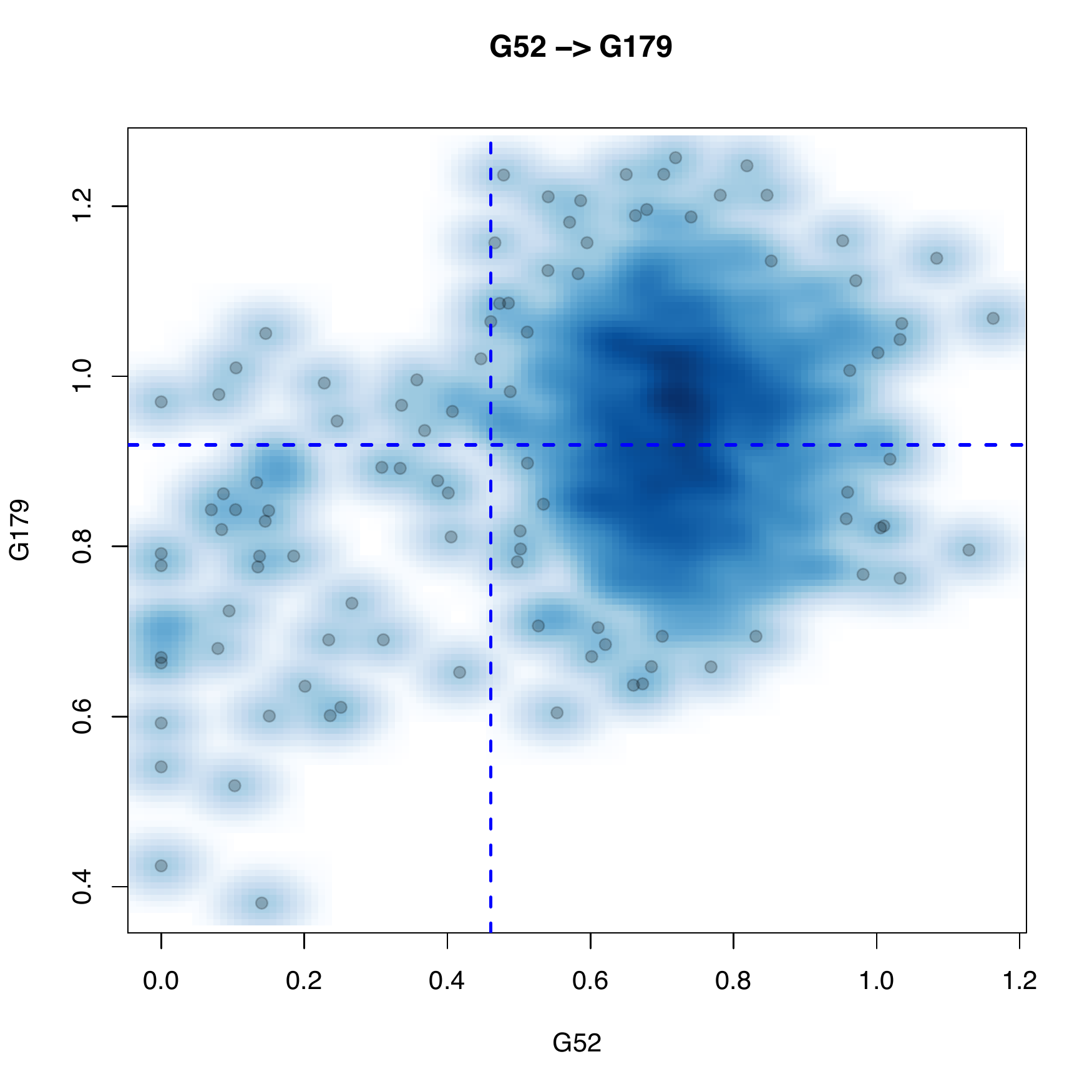}
\caption{{\bf Detected true interactions from in silico data used in DREAM5 Challenges demonstrate the advantage of \textsc{FunChisq}.}  Blue dashed lines are bin boundaries for quantization.  The horizontal axis represents $X$ and the vertical axis is for $Y$.  In these examples $Y$ exhibits much stronger nonlinear functional dependency on $X$ than the opposite.  They are representatives of typical nonlinear responses of $Y$ as $X$ increases: \textbf{(a)} Gradually decreasing response, \textbf{(b)} Sharply decreasing response, \textbf{(c)} Saturated response, and \textbf{(d)} Gradually increasing response.}
\label{fig:D5N1}
\end{figure}
Table~\ref{fig:D5N1} gives the functional chi-square statistics of those interactions in Fig.~\ref{fig:D5N1} as well as the ranking among all interactions we inspected for the data set.
\begin{table}[htbp]
\centering
\caption{The statistics of interactions demonstrating the effectiveness of \textsc{FunChisq} in Fig.~\ref{fig:D5N1}.}
\label{tab:D5N1}
\begin{tabular}{ccrccr}
\toprule
Interaction $X\to Y$ & Groundtruth & $\chi^2(f:X\to Y)$ & Degrees of freedom & $p$-value & Rank \\ 
\midrule
G121$\to$G131 & True & 248.08 & 2 & 1.35e-54 & 8\\
G131$\to$G121 & False & 179.10 & 2 & 1.28e-39 & 39\\
\midrule
G121$\to$G106 & True 	& 363.39 	& 12 &  2.09e-70 & 82 \\
G106$\to$G121 & False 	& 228.74 	& 12 & 3.64e-42 & 211  \\
\midrule
G121$\to$G152 & True &  55.09   &  2 & 1.09e-12   & 656 \\
G152$\to$G121 & False & 28.48    & 2  & 6.53e-07   & 3193 \\
\midrule
G52$\to$G179 & True & 25.14    & 1  &  5.34e-07   & 1878   \\
G179$\to$G52 & False & 5.14    & 1  &  0.0234  &  17352   \\
\bottomrule
\end{tabular}
\end{table}

\subsection{Evaluation of \textsc{FunChisq} on DREAM5 Challenges E.\ coli and yeast microarray data set}

We also evaluated \textsc{FunChisq} on DREAM5 E.\ coli and yeast microarray data sets.  The performance of the four methods are statistically close and there was no clear winner.  We inspected the data sets and discovered that the perturbations applied in the experimental design generated mostly either linear or normally distributed scatter plots.  We ponder that the lack of nonlinear dynamics may explain why the \textsc{FunChisq} did not stand out on both data sets. Figure \ref{fig:D5N3N4} shows two true interactions demonstrating the effectiveness of \textsc{FunChisq} that we identified from microarray data of E.\ coli and yeast.
\begin{figure}[htbp]
\centering
\textbf{\textsf{a}} \; \includegraphics[width=.45\linewidth]{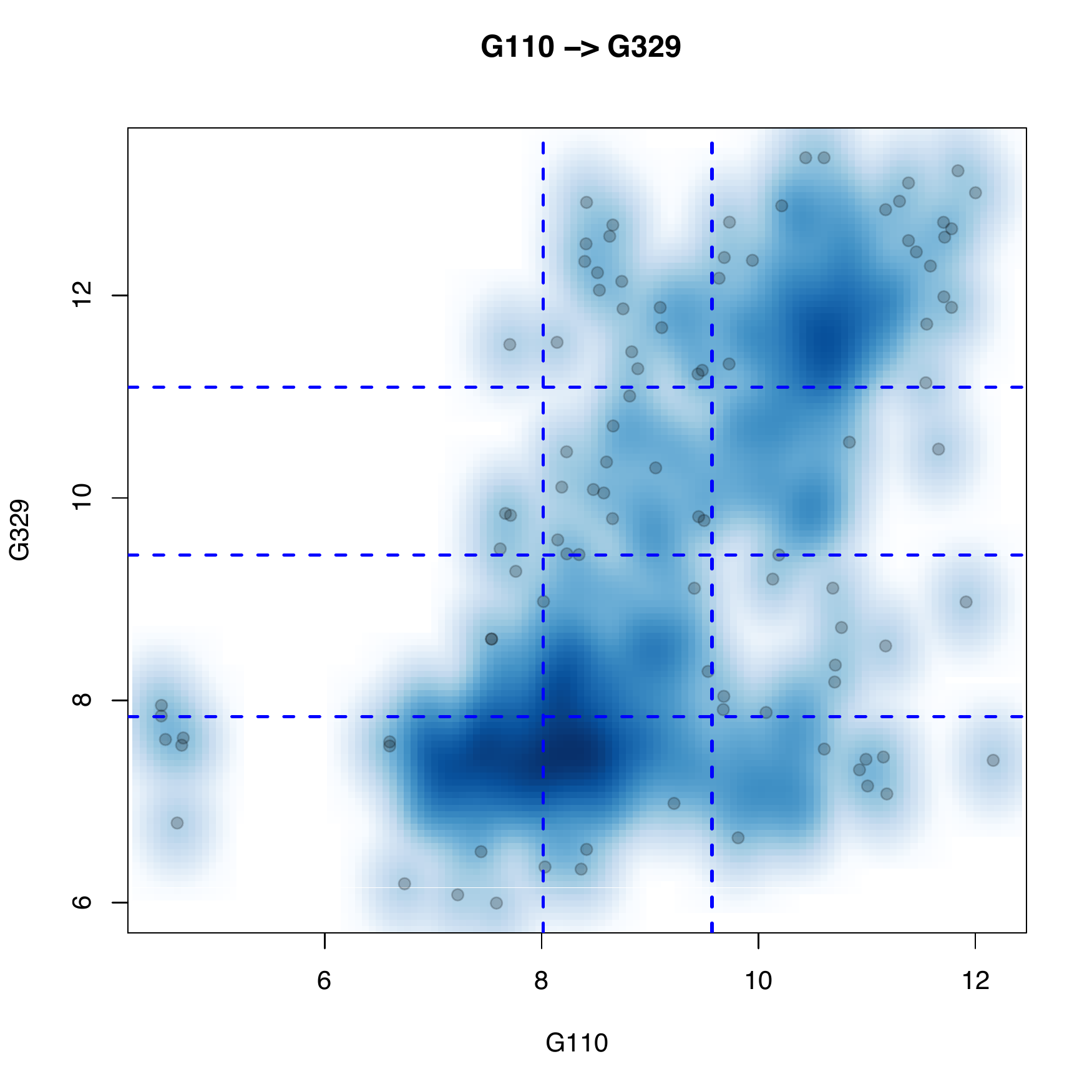}
\textbf{\textsf{b}} \; \includegraphics[width=.45\linewidth]{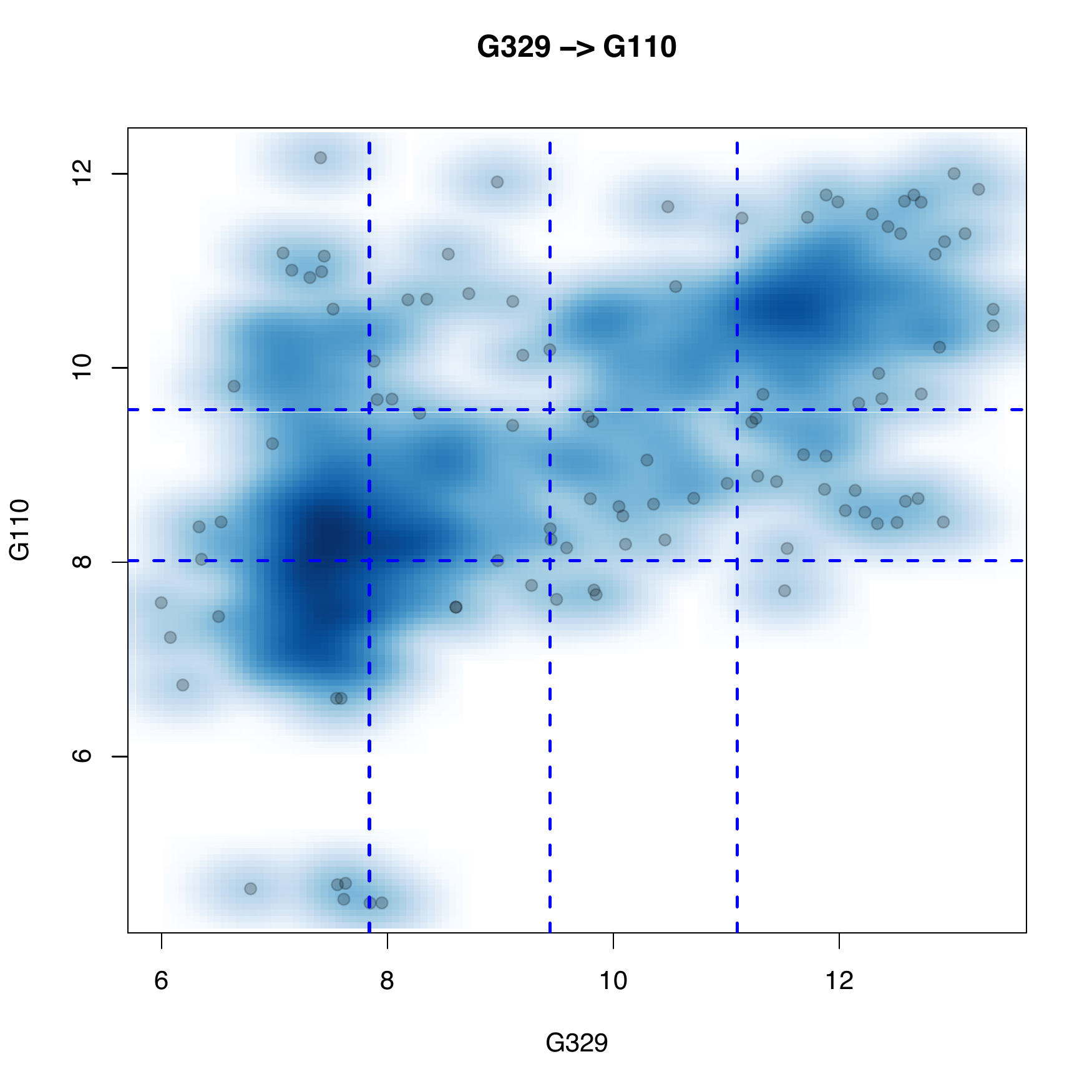}\\
\textbf{\textsf{c}} \; \includegraphics[width=.45\linewidth]{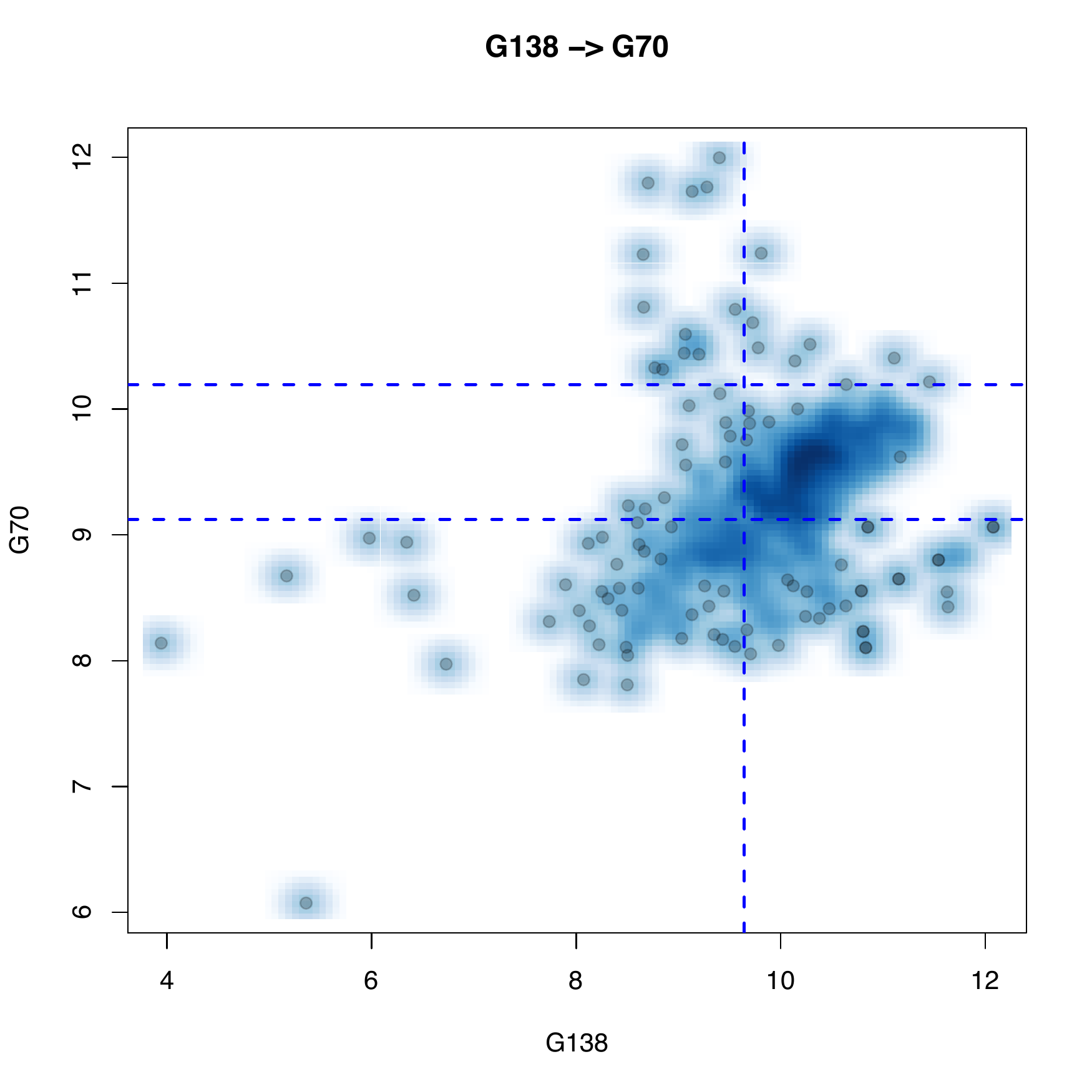}
\textbf{\textsf{d}} \; \includegraphics[width=.45\linewidth]{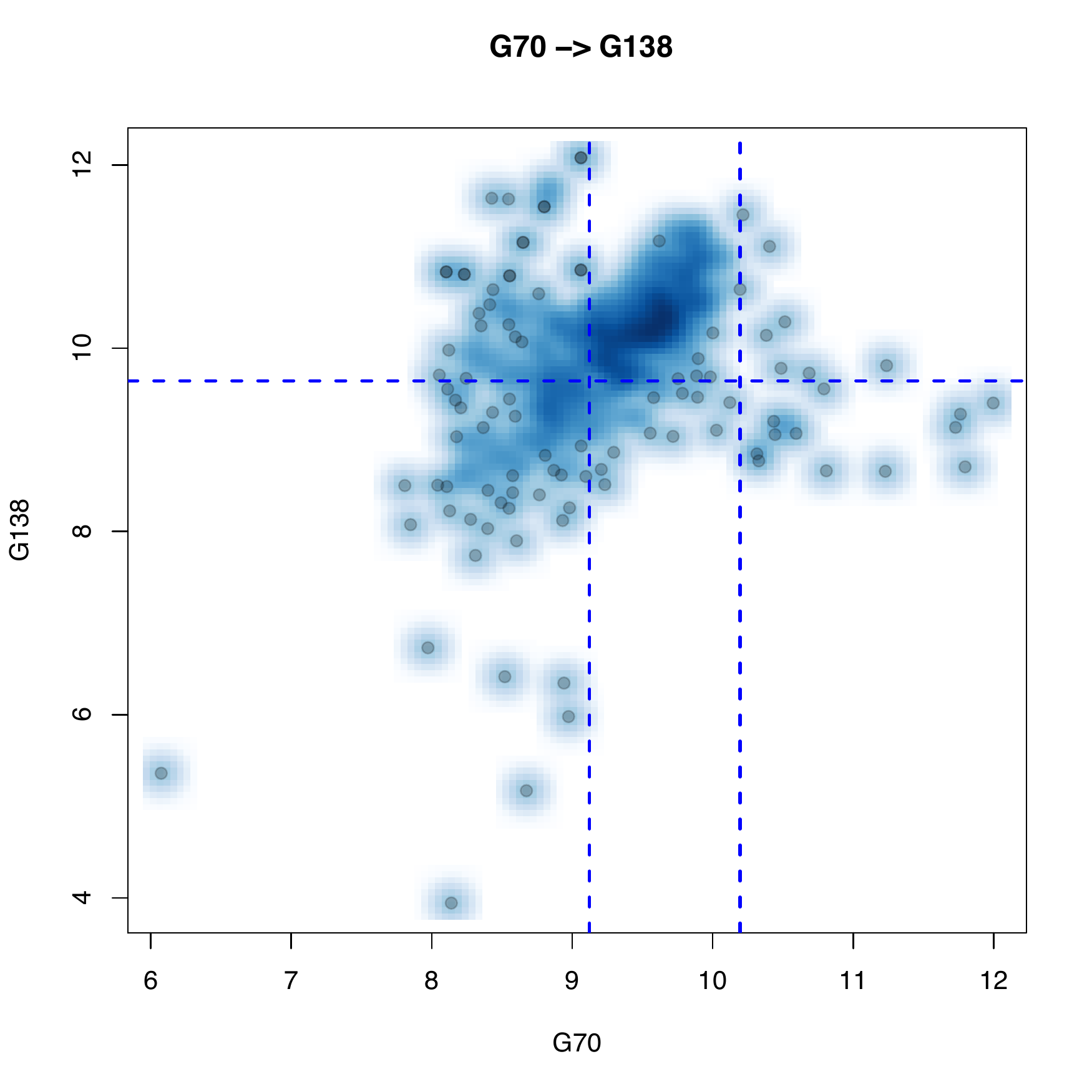}
\caption{{\bf Detected true interactions from DREAM5 Challenges E.\ coli and yeast microarray gene expression data demonstrate the advantage of \textsc{FunChisq}.}  Blue dashed lines are bin boundaries for quantization.  The horizontal axis represents $X$ and the vertical axis is for $Y$.  \textbf{(a)} A true interaction.   \textbf{(b)} The reverse of (a), an incorrect interaction.  \textbf{(c)} A second true interaction.  \textbf{(d)}  The reverse of (c), an incorrect interaction.}
\label{fig:D5N3N4}
\end{figure}
Table~\ref{fig:D5N3N4} gives the functional chi-square statistics of those interactions in Fig.~\ref{fig:D5N3N4} as well as the ranking among all interactions we inspected for the data set.  These are the best examples that we can identify from the microarray data sets.  The observed interaction pattern suggests potential directions for refining future experimental design.  
\begin{table}[htbp]
\centering
\caption{The statistics of interactions demonstrating the effectiveness of \textsc{FunChisq} in Fig.~\ref{fig:D5N3N4}.}
\label{tab:D5N3N4}
\begin{tabular}{ccrccr}
\toprule
Interaction $X\to Y$ & Groundtruth & $\chi^2(f:X\to Y)$ & Degrees of freedom & $p$-value & Rank \\ 
\midrule
G110$\to$G329 & True & 509.76 & 6 & 6.65e-107 & 138 \\
G329$\to$G110 & False & 451.07 & 6 & 2.89e-94& 253\\
\midrule
G138$\to$G70 & True 	&  184.72 	& 2 & 7.73e-41 & 780 \\
G70$\to$G138 & False 	&  118.18	& 2 & 2.18e-26 & 4253\\
\bottomrule
\end{tabular}
\end{table}

\section{Discussion}

\subsection{Related work to directionality test}

Conditional entropy $H(Y|X)=H(X,Y)-H(X)$ has been used in inferring gene networks \citep{Zhao2006inferring,Lopes2011}.  It measures how much uncertainty in $Y$ cannot be explained by $X$. A strong association between $X$ and $Y$ minimizes $H(Y|X)$.  Figure~\ref{fig:condentropy} provides a counter example that conditional entropy may overlook functional dependencies between two discrete variables.  There is no obvious way to fix this general flaw.
\begin{figure}[htbp]
\centering
\textbf{\textsf{a}} \quad
\begin{tabular}{ccc}
\toprule
     	& \#$(Y=1)$ 	& \#$(Y=2)$ \\
\midrule
$X=1$ &  0 	& 5 \\
$X=2$ &  0 	& 5 \\
\bottomrule
\end{tabular}
\hspace{.5in}
\textbf{\textsf{b}} \quad
\begin{tabular}{ccc}
\toprule
     	& \#$(Y=1)$ 	& \#$(Y=2)$ \\
\midrule
$X=1$ &  1 	& 4 \\
$X=2$ &  4 	& 1 \\
\bottomrule
\end{tabular}
\caption{{\bf Conditional entropy favors constant over non-constant functions.}  \textbf{(a)} The table has $H(Y|X)=0$, but the underlying constant function is not interesting. \textbf{(b)} The table has $H(Y|X)>0$ but $Y$ is linearly dependent on $X$.}
\label{fig:condentropy}
\end{figure}

ANOVA uses a mean-variance representation of each discrete group of $X$ based on normality assumption about noise \citep{Kuffner2012}.  It assumes that response continuous variable $Y$ is a linear function of the group means.  This mean-variance representation does not distinguish whether multiple or single peak are associated with the mean and variance.  This may thus make ANOVA insensitive to complex nonlinear functional relationships. 

Similar to ANOVA, logic regression \citep{Ruczinski2003logic} is related to \textsc{FunChisq} in that it also establishes functional dependency of a continuous variable on multiple discrete variables.  The difference is an additive parametric form is assumed for the underlying functional relationship, while \textsc{FunChisq} assumes no parametric forms.

Network deconvolution \citep{Feizi2013,Barzel2013} is designed to eliminate indirect interactions from direct ones, based on linear additivity of information flow in networks.  Our work is complementary from the nonlinearity angle and assumes functionality instead of linear additivity. 

\subsection{Generalization to testing against non-uniform response distributions}

We can generalize the functional chi-square test to a response chi-square test against a null response (column) distribution $p^0=[p_1^0,\ldots,p_s^0]^{\top}$ other than the uniform distribution as follows: 
\begin{equation}
\chi^2(p^0) = \sum_{i=1}^r \sum_{j=1}^s \frac{(n_{ij}-n_{i\cdot} p_j^0)^2}{n_{i\cdot} p_j^0} - \sum_{j=1}^s \frac{(n_{\cdot j}-np_j^0)^2}{n p_j^0}
\end{equation}
which measures deviation from the response distribution contributed to by the row variable.  When $p_j^0=n_{\cdot j}/{n}$, $\chi^2(p^0)$ is Pearson's chi-square, measuring departure from column marginal distribution contributed to by the row variable.  When $p_j^0=1/s$, $\chi^2(p^0)$ is the functional chi-square, measuring departure from uniform distribution contributed to by the row variable.  When one wishes to test how the response distribution have been modified from the one observed at an unperturbed condition, $p^0$ can be set to the observed conditional response distribution given the unperturbed condition.

\subsection{Implications to biological experimental design}

The functional chi-square test works most effectively when the underlying functional relationship is nonlinear.  This fits well to biological systems where nonlinear functional relationships are deeply rooted at both small and large scales, for example, the Hill equation for ligand and macromolecule binding.  In addition to design experiments around EC1 (under-saturated), EC50 (linear working zone), and EC99 (saturated), it is desirable for \textsc{FunChisq} to have data for responses at EC25 and EC75 to capture nonlinear-linear-nonlinear transition.  The effectiveness of \textsc{FunChisq} with proteomics data sets (not shown) using recent biotechnology indicates a potential convergence towards nonlinear dynamics between experimental design and network inference.  Therefore, contrary to perturbation theory where small perturbations are applied to find a solution by linear approximation, our results suggest that large perturbations can facilitate network dependency structure identification for a nonlinear system.

\subsection{Future work}

It is possible to use \textsc{FunChisq} in the context of temporal and functional dependencies to provide stronger evidence for causality.  For systems with complex dynamics, for example, a bistable switch, we can instead model the rate of change as a function of the stimulus to avoid non-functional dynamics between the variables.  Most importantly, \textsc{FunChisq} can be used to understand propagation of information in other types of dependency network, such as those in genome-wide association studies.

\bibliographystyle{apalike}

\bibliography{FunChisq.bib}

\appendix

\section{Zeros of the functional chi-square statistic} 

\label{AppZeros}

\begin{proposition}
$\chi^2(f:X\to Y)$ is zero if the empirical joint distribution of $X$ and $Y$ can be factorized as $\hat{P}(X,Y)=\hat{P}(X)\hat{P}(Y)$, or $X$ and $Y$ are empirically statistically independent $\hat{P}(Y|X)=\hat{P}(Y)$.
\label{prop:zeros}
\end{proposition}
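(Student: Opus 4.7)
The plan is to reduce both hypotheses to a single algebraic identity on the cell counts and then verify by direct substitution that the two sums in the definition cancel. First I would observe that the two conditions stated, factorization $\hat{P}(X,Y)=\hat{P}(X)\hat{P}(Y)$ and conditional equality $\hat{P}(Y\mid X)=\hat{P}(Y)$, are in fact equivalent statements about the empirical table and are both captured by the single relation
\begin{equation*}
n_{ij} \;=\; \frac{n_{i\cdot}\, n_{\cdot j}}{n}, \qquad i=1,\ldots,r,\; j=1,\ldots,s.
\end{equation*}
So it suffices to prove that this relation alone forces $\chi^2(f:X\to Y)=0$. I would handle the degenerate case $n_{i\cdot}=0$ by dropping such rows from the sum (the corresponding term is vacuous under the standard $0/0=0$ convention), so that the divisions below are well defined.

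Next I would substitute the identity into the row chi-squares in equation~(\ref{eq:FCdef}). Writing
\begin{equation*}
n_{ij} - \frac{n_{i\cdot}}{s} \;=\; n_{i\cdot}\!\left(\frac{n_{\cdot j}}{n} - \frac{1}{s}\right),
\end{equation*}
the per-row term simplifies to
\begin{equation*}
\frac{(n_{ij} - n_{i\cdot}/s)^2}{n_{i\cdot}/s} \;=\; s\, n_{i\cdot}\!\left(\frac{n_{\cdot j}}{n} - \frac{1}{s}\right)^{\!2},
\end{equation*}
and so $\chi^2(Y\mid X=i) = s\, n_{i\cdot} \sum_{j}(n_{\cdot j}/n - 1/s)^2$. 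Summing over $i$ and using $\sum_i n_{i\cdot} = n$ gives
\begin{equation*}
\sum_{i=1}^r \chi^2(Y\mid X=i) \;=\; s\, n \sum_{j=1}^s \!\left(\frac{n_{\cdot j}}{n} - \frac{1}{s}\right)^{\!2}.
\end{equation*}
Finally I would rewrite $\chi^2(Y)$ in the same form by pulling a factor of $n$ out of each squared deviation, obtaining
\begin{equation*}
\chi^2(Y) \;=\; \sum_{j=1}^s \frac{(n_{\cdot j}-n/s)^2}{n/s} \;=\; s\, n \sum_{j=1}^s\!\left(\frac{n_{\cdot j}}{n} - \frac{1}{s}\right)^{\!2},
\end{equation*}
which matches the previous display exactly. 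Substituting both expressions into the definition of $\chi^2(f:X\to Y)$ yields $0$.

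There is no real obstacle here: once the independence hypothesis is rewritten as the cell-product identity, the proof is a routine calculation in which the row variation and the marginal variation of $Y$ collapse to the same quantity. The only point that merits care is the empty-row convention, and noting at the start that factorization and conditional independence are the same condition on the empirical counts so that a single case handles both clauses of the proposition.
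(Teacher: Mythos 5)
Your proof is correct and takes essentially the same route as the paper: both reduce the hypothesis to the cell identity $n_{ij}=n_{i\cdot}n_{\cdot j}/n$ and verify by direct substitution that the summed row chi-squares equal $\chi^2(Y)$, so the difference vanishes. The only differences are cosmetic (you normalize the squared deviations by $n$ where the paper factors out $n_{i\cdot}/n$, and you add the harmless remark about empty rows), so nothing further is needed.
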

\begin{proof}
This can be readily proved by the definition of the functional chi-square statistic. By empirically $\hat{P}(X,Y)=\hat{P}(X)\hat{P}(Y)$, we mean
\begin{equation}
\hat{P}(X,Y)=\frac{n_{ij}}{n} = \hat{P}(X)\hat{P}(Y) = \frac{n_{i\cdot}}{n} \cdot \frac{n_{\cdot j}}{n} \quad \text{or} \quad n_{ij}=\frac{n_{i\cdot} \cdot n_{\cdot j}}{n}
\end{equation}
Plugging $n_{ij}$ into Eq.~(\ref{eq:FCdef}), we obtain 
\begin{align}
\chi^2(f:X\to Y) 
 & = \sum_{i=1}^r \sum_{j=1}^s \frac{\left(\frac{n_{i\cdot} \cdot n_{\cdot j}}{n}-n_{i\cdot}/{s}\right)^2}{{n_{i\cdot}}/{s}} - \sum_{j=1}^s \frac{(n_{\cdot j}-n/s)^2}{n/s} \\
 & = \sum_{i=1}^r \sum_{j=1}^s \frac{n_{i\cdot}}{n} \frac{(n_{\cdot j}-n/s)^2}{n/s} - \sum_{j=1}^s \frac{(n_{\cdot j}-n/s)^2}{n/s} \\
 & = 0
\end{align}

\end{proof}

\section{The null distribution of the functional chi-square test}

\label{AppNull}

\def \iit {\mathbf{11}^{\top}} 
\begin{lemma}
The chi-square statistic in the goodness-of-fit test with $s$ classes can be decomposed into the sum of $s-1$ independent chi-squares of 1 degree of freedom when the expected frequencies are the same for each class.
\label{l:chisqde}
\end{lemma}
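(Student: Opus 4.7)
The plan is to exhibit the goodness-of-fit chi-square statistic, in its asymptotic regime, as the squared Euclidean norm of a random vector confined to an $(s-1)$-dimensional subspace on which the limiting distribution is standard multivariate normal. Any orthonormal basis of that subspace then yields $s-1$ independent chi-square(1) summands whose sum recovers the statistic.

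First I set up coordinates by letting $N_1,\ldots,N_s$ be the cell counts of a multinomial sample of size $n$ with common cell probability $1/s$ and putting $Z_j = (N_j - n/s)/\sqrt{n/s}$, so that $\chi^2 = \sum_{j=1}^s Z_j^2 = \|Z\|^2$. The sampling constraint $\sum_j N_j = n$ forces $\mathbf{1}^\top Z = 0$ identically, confining $Z$ to the hyperplane $H = \{z \in \mathbb{R}^s : \mathbf{1}^\top z = 0\}$.

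Next I invoke the multinomial central limit theorem to obtain $Z \Rightarrow \mathcal{N}(0, \Sigma)$ with $\Sigma = I_s - \tfrac{1}{s}\iit$, which is precisely the orthogonal projector onto $H$: it has eigenvalue $0$ in the direction $\mathbf{1}/\sqrt{s}$ and eigenvalue $1$ of multiplicity $s-1$ on $H$. I then pick any orthogonal matrix $Q$ whose first column is $\mathbf{1}/\sqrt{s}$ (the Helmert matrix being a standard explicit choice) and set $W = Q^\top Z$. The coordinate $W_1 = \mathbf{1}^\top Z/\sqrt{s}$ vanishes identically, while $(W_2,\ldots,W_s)$ is asymptotically $\mathcal{N}(0, I_{s-1})$ because $Q^\top \Sigma Q = \operatorname{diag}(0,1,\ldots,1)$. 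The $s-1$ random variables $W_2^2, \ldots, W_s^2$ are therefore mutually independent, each chi-square with one degree of freedom. Orthogonality of $Q$ preserves the Euclidean norm, so $\chi^2 = \|Z\|^2 = \|W\|^2 = \sum_{k=2}^s W_k^2$ is the desired decomposition.

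The main delicacy is bookkeeping around the rank deficiency of $\Sigma$: $Z$ is a degenerate Gaussian in $\mathbb{R}^s$, so the phrase ``sum of $s-1$ independent chi-squares'' is only meaningful once one restricts attention to the non-degenerate coordinates. The orthogonal change of basis accomplishes exactly this, letting the independence and the chi-square(1) marginals be read off directly from the diagonalized covariance, and avoiding any pseudo-inverse machinery.
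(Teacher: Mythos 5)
Your proof is correct and follows essentially the same route as the paper's (which defers to Boero et al.\ and summarizes it): both diagonalize the singular covariance $I-\mathbf{1}\mathbf{1}^{\top}/s$ and read off $s-1$ independent asymptotically standard normal coordinates whose squares sum to the statistic; your Helmert matrix $Q$ is just an explicit choice of the paper's $(s-1)\times s$ matrix $V$ with $VV^{\top}=I$ and $V^{\top}V=I-\mathbf{1}\mathbf{1}^{\top}/s$. No gaps.
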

The proof of Lemma~\ref{l:chisqde} can be found in \citep{Boero2004Decompositions}.  Here we summarize important results used in the proof before we apply them to the functional chi-square statistic.  The observed frequencies $n_1,\ldots,n_s$ in each of $s$ classes after $n$ trials generally follow a multinomial distribution with equal success probability $p_j=1/s$ $(j=1,\ldots,s)$.  The covariance matrix is given by
\[
	\Sigma = (n/s)(I - \iit/s) 
\]
where $\mathbf{1}$ is a vector of $s$ 1's.  Since matrix $I - \iit/s$ of rank $s-1$ is both idempotent and Hermitian, it can be factorized as \citep{Rao1998Matrix}
\begin{equation}
I - \iit/s = V^{\top}V 
\label{eq:V}
\end{equation}
where $V$ is a $(s-1)\times s$ matrix satisfying $VV^{\top}=I$.  The rows of $V$ are the $s-1$ eigenvectors associated with the non-zero eigenvalues of matrix $I - \iit/s$.  Therefore we have
\[
\Sigma = (n/s) V^{\top}V
\]
Let the standardized frequency vector be
\[
a = [a_1,\ldots,a_s]^{\top} = \left[\frac{n_1-n/s}{\sqrt{n/s}},\ldots,\frac{n_s-n/s}{\sqrt{n/s}}\right]^{\top}
\]
The transformation $e = Va$ identifies $s-1$ independent and asymptotically standard normal variables in vector $e=[e_1,\ldots,e_m]$ such that \citep{Boero2004Decompositions}
\begin{equation}
\chi^2 = \sum_{j=1}^s \frac{(n_j-np_j)^2}{np_j} = \|Va\|^2 = \sum_{m=1}^{s-1} e_m^2 
\end{equation}
It immediately follows that $e_1^2,\ldots,e_{s-1}^2$ are $s-1$ independent chi-square variables of 1 degree of freedom. 

\begin{proposition}
Each conditional row (parent) chi-square $\chi^2(Y|X=i)$ can be decomposed to a sum of $s-1$ independent chi-squares with 1 degree of freedom.
\end{proposition}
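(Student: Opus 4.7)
The plan is to reduce this proposition to a direct application of Lemma~\ref{l:chisqde}, which already handles the goodness-of-fit decomposition with equal expected class probabilities. The key observation is that $\chi^2(Y|X=i)$ is itself exactly a goodness-of-fit chi-square statistic on the $i$-th row of the contingency table.

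First I would fix a row index $i$ and condition on the row total $n_{i\cdot}$. Under the null hypothesis stated in the Asymptotics property (that $Y$ is uniformly distributed conditioned on $X$), the row frequencies $n_{i1},\ldots,n_{is}$ then follow a multinomial distribution with $n_{i\cdot}$ trials and equal success probability $p_j = 1/s$. This matches the multinomial setup assumed in Lemma~\ref{l:chisqde} verbatim, with the total $n$ there replaced by $n_{i\cdot}$ here. The expected count in each class is $n_{i\cdot}/s$, and the row statistic $\chi^2(Y|X=i) = \sum_{j=1}^s (n_{ij}-n_{i\cdot}/s)^2 / (n_{i\cdot}/s)$ is precisely the goodness-of-fit chi-square for this multinomial.

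Next I would invoke the lemma directly. Introduce the row-specific standardized frequency vector
\[
a^{(i)} = \left[\frac{n_{i1}-n_{i\cdot}/s}{\sqrt{n_{i\cdot}/s}},\ldots,\frac{n_{is}-n_{i\cdot}/s}{\sqrt{n_{i\cdot}/s}}\right]^{\top},
\]
and apply the same orthogonal-type transformation $e^{(i)} = V a^{(i)}$, where $V$ is the $(s-1)\times s$ matrix from Eq.~(\ref{eq:V}). The lemma then gives $\chi^2(Y|X=i) = \|V a^{(i)}\|^2 = \sum_{m=1}^{s-1}(e^{(i)}_m)^2$, and the components $e^{(i)}_1,\ldots,e^{(i)}_{s-1}$ are asymptotically independent standard normals, so their squares are $s-1$ independent chi-squares of 1 degree of freedom.

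The only subtlety, and what I would flag explicitly in the proof, is that the lemma is being applied conditionally on the row margin $n_{i\cdot}$: this is what makes the per-row multinomial model exact and the class probabilities equal to $1/s$ under the null. Everything else is a direct pull-back of the lemma's construction to the $i$-th row, so I do not anticipate a genuine obstacle; the ``hard part'' is just making the conditioning on $n_{i\cdot}$ and the null assumption about $Y|X$ explicit, which sets up the next step in the appendix where row-level decompositions are combined across rows to derive the overall $(r-1)(s-1)$ degrees-of-freedom asymptotics of Theorem~\ref{funchisq}.
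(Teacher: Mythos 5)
Your proposal is correct and follows essentially the same route as the paper: both treat $\chi^2(Y|X=i)$ as a goodness-of-fit chi-square on the $i$-th row under the equal-probability multinomial null and apply Lemma~\ref{l:chisqde} via the same standardization $a_{ij}=(n_{ij}-n_{i\cdot}/s)/\sqrt{n_{i\cdot}/s}$ and transformation by $V$. Your explicit remark about conditioning on the row margin $n_{i\cdot}$ is a useful clarification that the paper leaves implicit, but it does not change the argument.
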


\begin{proof}
We apply chi-square decomposition in Lemma~\ref{l:chisqde} on the conditional chi-square statistic of row $i$ defined by 
\[
	\chi^2(Y|X=i) = \sum_{j=1}^s \frac{(n_{ij}-n_{i\cdot}/s)^2}{n_{i\cdot}/s}
\]
Let the standardized frequency in each of the $k$ cell in row $i$ be
\begin{equation}
	a_{ij} = \frac{n_{ij}-n_{i\cdot}/s}{\sqrt{n_{i\cdot}/s}}
\label{eq:stdfrqi}
\end{equation}
By transforming with $(s-1) \times s$ matrix $V$ (Eq.~\ref{eq:V})
\[
[e_{i,1}, \ldots, e_{i,s-1}]^{\top} = V [a_{i1}, \ldots, a_{is}]^{\top}
\]
into $s-1$ independent standard normal variables, we obtain the chi-square decomposition
\begin{equation}
	\chi^2(Y|X=i) = e_{i,1}^2 + \cdots + e_{i,s-1}^2
\end{equation}
of $s-1$ components of independent $\chi^2_1$ each with 1 degree of freedom.
\end{proof}

\begin{proposition}
The column (child) marginal chi-square $\chi^2(Y)$ can be decomposed into the sum of $s-1$ independent chi-square random variables with 1 degree of freedom.
\end{proposition}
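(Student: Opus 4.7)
The plan is to apply Lemma~\ref{l:chisqde} directly to the column marginal counts $n_{\cdot 1},\ldots,n_{\cdot s}$, mirroring the argument just carried out for each conditional row chi-square. The statement and proof are essentially parallel: the lemma delivers the desired chi-square decomposition whenever the underlying counts form a multinomial with equal cell probabilities, so the work reduces to verifying that this hypothesis applies to the column marginals.

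First I would establish that under the stated null hypothesis---that $Y$ is uniformly distributed conditioned on $X$---the column marginal vector $(n_{\cdot 1},\ldots,n_{\cdot s})$ follows a multinomial distribution with $n$ trials and equal success probability $1/s$. This follows by marginalizing over $X$: $P(Y=j) = \sum_i P(Y=j \mid X=i)\,P(X=i) = (1/s)\sum_i P(X=i) = 1/s$, so each of the $n$ observations independently realizes class $j$ with probability $1/s$. Unlike the conditional row chi-squares, whose multinomial structure is conditional on $n_{i\cdot}$, the column chi-square uses the unconditional joint law of the $n_{\cdot j}$, so this marginalization step is the one substantive thing to spell out.

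Next I would standardize, setting $b_j = (n_{\cdot j} - n/s)/\sqrt{n/s}$ so that $\chi^2(Y) = \sum_{j=1}^s b_j^2 = \|b\|^2$. Applying the same $(s-1)\times s$ transformation matrix $V$ from Eq.~(\ref{eq:V}), with $VV^{\top}=I$ and $V^{\top}V = I - \iit/s$, I would form $h = Vb = [h_1,\ldots,h_{s-1}]^{\top}$. Because the component of $b$ along $\mathbf{1}$ vanishes (since $\sum_j (n_{\cdot j} - n/s)=0$), the computation summarized before Lemma~\ref{l:chisqde} gives $\|b\|^2 = \|Vb\|^2 = \sum_{m=1}^{s-1} h_m^2$, and the covariance argument in that same summary shows the $h_m$ are asymptotically independent standard normal. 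Therefore $\chi^2(Y) = \sum_{m=1}^{s-1} h_m^2$ is a sum of $s-1$ independent $\chi^2_1$ random variables.

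There is no real obstacle here: the proposition is Lemma~\ref{l:chisqde} specialized to the column marginal, and once the multinomial structure of the $n_{\cdot j}$ is recorded, the decomposition is immediate from the orthogonal-transformation machinery already imported from \citet{Boero2004Decompositions}. The only delicate point is the verification of the multinomial hypothesis, which is why I would foreground it before invoking the lemma.
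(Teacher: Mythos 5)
Your proposal is correct and follows essentially the same route as the paper: apply Lemma~\ref{l:chisqde} to the standardized column marginal frequencies and transform by $V$ to obtain $s-1$ independent $\chi^2_1$ components (your explicit verification that the $n_{\cdot j}$ are multinomial with equal probabilities under the null is a reasonable elaboration the paper leaves implicit). The only substantive item the paper's proof records that yours omits is the identity $a_j=\sum_{i=1}^r \sqrt{n_{i\cdot}/n}\,a_{ij}$ linking the column standardized frequencies to the row ones, which is not needed for this proposition but is used later in the proof of Theorem~\ref{funchisq}.
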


\begin{proof}
By Lemma~\ref{l:chisqde}, we can represent the child marginal chi-square $\chi^2(Y)$ 
\[
\chi^2(Y)=\sum_{j=1}^s \frac{(n_{\cdot j}-n/s)^2}{n/s}
\]
using the chi-square components
\[
	\chi^2(Y) =  e_{1}^2 + \cdots + e_{s-1}^2
\]
where the $s-1$ dimension vector of independent standard normal variables  
\[
[e_{1}, \ldots, e_{s-1}]^{\top} = V [a_{1}, \ldots, a_{s}]^{\top}
\]
transformed from $s$ standardized marginal frequencies for the child variable defined by   
\begin{equation}
a_{j} = \frac{n_{\cdot j}-n/s}{\sqrt{n/s}} = \sum_{i=1}^r \sqrt{\frac{n_{i\cdot}}{n}} a_{ij}
\label{eq:aj}
\end{equation}
In Eq.~(\ref{eq:aj}), the first equality is by definition and the second by Eq.~(\ref{eq:stdfrqi}).
\end{proof}

\begin{theorem}
Under the null hypothesis that discrete random variable $Y$ is uniformly distributed conditioned on random variable $X$, $\chi^2(f:X\to Y)$ follows a chi-square distribution asymptotically.
\label{funchisq}
\end{theorem}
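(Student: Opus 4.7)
The plan is to combine the two preceding propositions with a Cochran-style orthogonal decomposition. Starting from
$$\chi^2(f:X\to Y) = \sum_{i=1}^r \chi^2(Y|X=i) - \chi^2(Y),$$
the first proposition expresses each row chi-square as $\chi^2(Y|X=i) = e_{i,1}^2 + \cdots + e_{i,s-1}^2$ where, conditioning on the row totals $n_{i\cdot}$, the rows are independent multinomials under the null, so the $r(s-1)$ variables $\{e_{i,m}\}$ are jointly asymptotically independent $N(0,1)$. The critical observation is that the matrix $V$ appearing in the second proposition is the same one used in each row: because $a_j = \sum_i \sqrt{n_{i\cdot}/n}\, a_{ij}$ by Eq.~(\ref{eq:aj}), linearity of $V$ forces $e_m = \sum_i \sqrt{n_{i\cdot}/n}\, e_{i,m}$ for every $m=1,\ldots,s-1$. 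Thus both terms of $\chi^2(f:X\to Y)$ live in the same set of coordinates.

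Next, fix $m$ and let $u = (\sqrt{n_{1\cdot}/n},\ldots,\sqrt{n_{r\cdot}/n})^{\top}$, a unit vector since $\sum_i n_{i\cdot}=n$. Writing $E_m = (e_{1,m},\ldots,e_{r,m})^{\top}$, we have $e_m = u^{\top}E_m$, and therefore
$$\sum_{i=1}^r e_{i,m}^2 - e_m^2 \;=\; E_m^{\top}(I-uu^{\top})E_m.$$
Since $I-uu^{\top}$ is a rank-$(r-1)$ orthogonal projector and $E_m$ is asymptotically $N(0,I_r)$, this quadratic form is asymptotically $\chi^2_{r-1}$. For $m\neq m'$ the vectors $E_m$ and $E_{m'}$ are built from disjoint asymptotically independent normals, so the contributions are asymptotically independent. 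Summing gives
$$\chi^2(f:X\to Y) \;=\; \sum_{m=1}^{s-1} E_m^{\top}(I-uu^{\top})E_m \;\sim\; \chi^2_{(r-1)(s-1)}$$
asymptotically, which also recovers the degrees of freedom claimed in the Asymptotics property.

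The main obstacle is that the weight vector $u$ depends on the random row totals $n_{i\cdot}$, so one cannot directly invoke the Gaussian projection argument with a deterministic $u$. The cleanest route is to argue conditionally on $n_{i\cdot}$: the rows then become exact independent multinomials, $u$ becomes deterministic inside the conditioning, and the per-row asymptotics go through as $n_{i\cdot}\to\infty$. A mild regularity assumption that each $n_{i\cdot}/n$ converges in probability to a strictly positive limit then lets Slutsky's theorem pass the conditional chi-square limit through the quadratic form unconditionally, completing the argument.
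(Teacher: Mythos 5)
Your proof follows essentially the same route as the paper's: the same decomposition of the row and column chi-squares via the matrix $V$, the same key identity $e_m = \sum_i \sqrt{n_{i\cdot}/n}\,e_{i,m}$, and the same quadratic form $E_m^{\top}(I-uu^{\top})E_m$ with the rank-$(r-1)$ idempotent projector, summed over $m$ with the independence argument to get $(r-1)(s-1)$ degrees of freedom. Your closing remark about conditioning on the random row totals and invoking Slutsky is a genuine technical refinement that the paper passes over silently, but it does not change the structure of the argument.
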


\begin{proof}
Although the difference between two independent chi-square random variables can be negative and is thus no longer chi-squared, we show that the functional chi-square statistic, the sum of row conditional chi-squares subtracted by the column chi-square, indeed gives rise to another chi-square random variable.

\def \vm {\left[v_{m1} \ldots v_{ms}\right]}
\def \eemT {\left[e_{1m} \ldots e_{rm}\right]}
\def \eem {\left[ \begin{array}{c} e_{1m} \\ \vdots \\ e_{rm} \end{array} \right]}

\def \ninT {\left[ \sqrt{\frac{n_{1\cdot}}{n}} \cdots \sqrt{\frac{n_{r\cdot}}{n}} \right]}
\def \nin {\left[ \begin{array}{c} \sqrt{\frac{n_{1\cdot}}{n}} \\ \vdots \\ \sqrt{\frac{n_{r\cdot}}{n}} \end{array} \right]}

Representing $V=[v_{mi}]$, defined in Eq.~(\ref{eq:V}), and for each conditional row chi-square $\chi^2(Y|X=i)$, we write its standard normal component as
\begin{equation}
e_{im} = \vm \left[ \begin{array}{c} a_{i1}\\ \vdots \\ a_{is} \end{array} \right]
\end{equation}
Similarly, we represent each standard normal component of the child marginal chi-square by  
\begin{align}
e_{m} &= \vm \left[ \begin{array}{c} a_{1}\\ \vdots \\ a_{s} \end{array} \right] \\
&= \vm \left[ \begin{array}{c} \sum_{i=1}^r \sqrt{\frac{n_{i\cdot}}{n}} a_{i1} \\ \vdots \\ \sum_{i=1}^r \sqrt{\frac{n_{i\cdot}}{n}} a_{is} \end{array} \right] \\
&= \vm \left[ \begin{array}{ccc} a_{11} & \cdots & a_{r1} \\ 
								\vdots & \ddots & \vdots \\ 
								a_{1s} & \cdots & a_{rs}  
	 \end{array} \right]
     \nin\\
&= \eemT \nin
\end{align}

Plugging in the above variables, the functional chi-square can be re-written as
\begin{align}
\chi^2 =& \left[\sum_{i=1}^r \chi^2(Y|X=i)\right] - \chi^2(Y)
= \left( \sum_{i=1}^r  \sum_{m=1}^{s-1} e_{im}^2 \right) - \sum_{m=1}^{s-1} e_{m}^2 
= \sum_{m=1}^{s-1} \left( \sum_{i=1}^r e_{im}^2 - e_{m}^2 \right) \\
=& \sum_{m=1}^{s-1} \left\{ \eemT \eem - \eemT \nin \ninT \eem \right\} \\
=& \sum_{m=1}^{s-1} \eemT \left\{ \underbrace{I - \nin \ninT}_{\text{Matrix } O \text{: idempotent with rank } r-1} \right\} \eem \label{eq:decomp}
\end{align}
Since matrix $O$ is idempotent with a rank of $r-1$ and $\eemT^{\top}$ is a vector of independent standard normal variables, the quadratic form $\sum_{i=1}^r e_{im}^2 - e_{m}^2$ is chi-squared with $r-1$ degrees of freedom \citep{Mathai1992}.  As $\eemT^{\top}$ over different $m$ are independent vectors, the above chi-squares for different $m$ are independent.  As the summation of independent chi-squares are still chi-squared with the degrees of freedom summed \citep{Casella2002}, it follows immediately that $\chi^2$ is asymptotically chi-square distributed with $(s-1)(r-1)$ degrees of freedom under the null hypothesis of no functional dependency. 
\end{proof}

\begin{corollary}
The functional chi-square statistic $\chi^2(f:X\to Y)$ is non-negative. 
\label{cor:nn}
\end{corollary}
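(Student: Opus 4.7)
The plan is to reuse the quadratic-form decomposition derived in the proof of Theorem~\ref{funchisq}. Writing $\mathbf{e}_m = (e_{1m},\ldots,e_{rm})^{\top}$ and $\mathbf{q} = (\sqrt{n_{1\cdot}/n},\ldots,\sqrt{n_{r\cdot}/n})^{\top}$, that proof establishes the identity
$$
\chi^2(f:X\to Y) \;=\; \sum_{m=1}^{s-1} \mathbf{e}_m^{\top} O\, \mathbf{e}_m, \qquad O = I - \mathbf{q}\mathbf{q}^{\top}.
$$
The first thing I would emphasize is that this identity is purely algebraic: the $e_{im}$'s and the entries of $\mathbf{q}$ are deterministic functions of the observed counts $n_{ij}$, so the equality holds pointwise on every feasible contingency table (rows with $n_{i\cdot}=0$ contribute zero on both sides and may be discarded without loss of generality).

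The second step is to observe that $O$ is positive semidefinite. Because $\sum_{i=1}^r n_{i\cdot}/n = 1$, the vector $\mathbf{q}$ is a unit vector, so $\mathbf{q}\mathbf{q}^{\top}$ is the rank-one orthogonal projector onto its span, and $O = I - \mathbf{q}\mathbf{q}^{\top}$ is the complementary orthogonal projector of rank $r-1$. Such a projector is symmetric and idempotent, with eigenvalues in $\{0,1\}$, which gives $\mathbf{v}^{\top}O\,\mathbf{v} \ge 0$ for every real vector $\mathbf{v}$. Specializing to $\mathbf{v} = \mathbf{e}_m$ makes each summand in the decomposition non-negative, and the corollary follows by summing over $m$.

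The only real obstacle is conceptual rather than technical: the decomposition in Theorem~\ref{funchisq} was introduced as a device for identifying the asymptotic null distribution, and one must be careful to point out that the quadratic-form identity itself is exact — not asymptotic — so non-negativity holds for any sample size, matching the unconditional claim made in the \textbf{Properties} section.
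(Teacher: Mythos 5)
Your proposal is correct and follows essentially the same route as the paper's own proof: both rely on the exact quadratic-form decomposition $\sum_{m} \mathbf{e}_m^{\top} O\, \mathbf{e}_m$ from Theorem~\ref{funchisq} and the observation that the symmetric idempotent matrix $O$ is positive semi-definite, so every summand is non-negative. Your added remarks (that $\mathbf{q}$ is a unit vector, and that the identity is algebraic rather than asymptotic) are accurate refinements of the same argument, the latter matching the paper's closing note that the bound holds mathematically and not merely asymptotically.
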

\begin{proof}
In the decomposition in Eq.~(\ref{eq:decomp}), matrix $O$ is idempotent and symmetric, that is
\begin{equation}
O^{\top}O=OO=O
\end{equation} 
Therefore, for any vector $x$, we have
\[
	x^{\top}Ox = x^{\top}O^{\top}Ox = (Ox)^{\top} Ox = \left\| Ox \right\|^2 \ge 0
\]
which implies that matrix $O$ is positive semi-definite.  Thus the $s-1$ quadratic forms involving $O$ in Eq.~(\ref{eq:decomp}) are always non-negative.  The sum of non-negative terms leads to non-negativity of $\chi^2$.  This is true mathematically (including asymptotically). 
\end{proof}

\section{Optimality of the functional chi-square test}

\label{AppOpt}

\begin{proposition}
Given $s$ non-negative numbers $x_1,\ldots,x_s$ that sum up to a constant $a\ge 0$, and another constant $c$, it follows that 
\[
\sum_{j=1}^s (x_j - c)^2 \le (a - c)^2 + (s-1)c^2
\]
where the equality holds true if and only if $x_m=a$ for some unique $m\in\{1,\ldots,s\}$ and all other $x_j$'s are zero.
\label{l:ub}
\end{proposition}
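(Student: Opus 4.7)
The plan is to reduce the inequality to the simple fact that, for non-negative numbers summing to $a$, the sum of squares is at most $a^2$, with equality iff the mass is concentrated at a single index.

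First I would expand the left-hand side using $\sum_j x_j = a$:
\begin{equation*}
\sum_{j=1}^s (x_j - c)^2 = \sum_{j=1}^s x_j^2 - 2c \sum_{j=1}^s x_j + s c^2 = \sum_{j=1}^s x_j^2 - 2ac + sc^2.
\end{equation*}
Similarly, I would expand the right-hand side:
\begin{equation*}
(a-c)^2 + (s-1)c^2 = a^2 - 2ac + c^2 + (s-1)c^2 = a^2 - 2ac + sc^2.
\end{equation*}
Subtracting the common terms $-2ac + sc^2$ reduces the claimed inequality to the single inequality $\sum_{j=1}^s x_j^2 \le a^2$, eliminating $c$ entirely from the problem. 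I would call attention to this cancellation as the key simplification: the constant $c$ plays no role once the cross terms $-2c x_j$ are summed against the constraint.

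Next I would prove this reduced inequality directly from $x_j \ge 0$ and $\sum_j x_j = a$. Squaring the constraint gives
\begin{equation*}
a^2 = \left(\sum_{j=1}^s x_j\right)^2 = \sum_{j=1}^s x_j^2 + 2 \sum_{1 \le j < k \le s} x_j x_k,
\end{equation*}
and since each cross product $x_j x_k$ is non-negative, the inequality $\sum_j x_j^2 \le a^2$ follows immediately.

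For the equality characterization, I would note that equality in the reduced inequality requires every cross term $x_j x_k$ (with $j \ne k$) to vanish, i.e., at most one $x_j$ is strictly positive. Combined with $\sum_j x_j = a$, this means either $a > 0$ and there is a unique index $m$ with $x_m = a$ and the rest zero, or $a = 0$ (in which case all $x_j = 0$, consistent with the stated condition by taking any $m$). I do not anticipate a real obstacle; the only subtlety is observing that the constant $c$ disappears after expansion, which turns what looks like a two-parameter inequality into the one-line fact about concentration of non-negative numbers.
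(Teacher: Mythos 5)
Your proof is correct and is essentially the same as the paper's: both expand the square, invoke $\bigl(\sum_j x_j\bigr)^2 = \sum_j x_j^2 + 2\sum_{j<k} x_j x_k$, and conclude by dropping the non-negative cross terms, with equality exactly when at most one $x_j$ is positive. Your presentation as a cancellation of $c$ followed by the reduced inequality $\sum_j x_j^2 \le a^2$ is just a reorganization of the identical algebra, though your explicit remark on the degenerate case $a=0$ is a small point the paper glosses over.
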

\begin{proof}
Starting from the left hand side of the inequality, we can derive
\begin{align*}
   \sum_{j=1}^s (x_j - c)^2
& = sc^2 - 2ac + \sum_{j=1}^s x_j^2 \\
& = sc^2 - 2ac + a^2 - 2\sum_{i=1}^s \sum_{j=i+1}^s x_i x_j\\
& = (a-c)^2 + (s-1) c^2 - 2\sum_{i=1}^s \sum_{j=i+1}^s x_i x_j\\
& \le (a-c)^2 + (s-1) c^2  \quad  (\because x_i, x_j \ge 0)
\end{align*}
The equality will hold in this last equality above if and only if $x_j=0$ for all $j$ except $x_m=a$ for some $m\in\{1,\ldots,s\}$. 
\end{proof}

\begin{theorem}
A contingency table of sample size $n$ and observed marginal distribution of column variable $Y$ of $s$ levels, $q=[q_1,\ldots,q_s]$, maximizes $\chi^2(f:X\to Y)$ if and only if $Y$ is a function of the row variable $X$ when such a contingency table is feasible.  The upper bound to the functional chi-square is given by $ns \cdot \left(1-\sum_{j=1}^s q_j^2\right)$.
\label{Opt}
\end{theorem}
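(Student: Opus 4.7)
My plan is to reduce the optimization to a row-by-row bound to which Proposition~\ref{l:ub} applies directly. The key observation is that once $n$ and the column marginals $n_{\cdot j}=nq_j$ are fixed, the child-marginal chi-square becomes a constant:
\begin{equation*}
\chi^2(Y)=\sum_{j=1}^s \frac{(nq_j-n/s)^2}{n/s}=ns\sum_{j=1}^s q_j^2 - n.
\end{equation*}
Consequently, maximizing $\chi^2(f:X\to Y)$ over admissible tables is equivalent to maximizing $\sum_i \chi^2(Y|X=i)$, so the problem decouples across rows.

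Next, I would apply Proposition~\ref{l:ub} to each row with $x_j=n_{ij}$, $a=n_{i\cdot}$, and $c=n_{i\cdot}/s$. A short simplification yields $\sum_j (n_{ij}-n_{i\cdot}/s)^2 \le n_{i\cdot}^2(s-1)/s$, hence
\begin{equation*}
\chi^2(Y|X=i)\le n_{i\cdot}(s-1),
\end{equation*}
with equality if and only if all of row $i$'s mass is concentrated in a single column, i.e., the function value $f(i)$ is well defined. Summing over $i$ and using $\sum_i n_{i\cdot}=n$ gives $\sum_i \chi^2(Y|X=i)\le n(s-1)$, with equality if and only if every row is a point mass, i.e., $Y$ is a function of $X$. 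Subtracting the constant $\chi^2(Y)$ produces the claimed upper bound $ns\bigl(1-\sum_j q_j^2\bigr)$ together with the ``if and only if'' characterization.

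Finally, I would address the feasibility clause: since the hypothesis fixes only $n$ and the column marginals while leaving the row marginals free, an explicit maximizer can always be exhibited whenever feasibility holds -- for instance, taking $r=s$, $n_{ii}=nq_i$, and zeros elsewhere, so that $Y$ becomes a deterministic function of $X$ with the prescribed column sums. The entire argument is essentially one application of Proposition~\ref{l:ub} plus bookkeeping; the only point that needs mild care is propagating the row-wise equality condition to the global statement $Y=f(X)$, which goes through cleanly precisely because $\chi^2(Y)$ has been absorbed as a constant and the row-wise inequalities can therefore be summed independently without any joint constraint beyond the automatically satisfied identity $\sum_i n_{i\cdot}=n$.
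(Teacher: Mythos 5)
Your proposal is correct and follows essentially the same route as the paper: apply Proposition~\ref{l:ub} row by row to get $\chi^2(Y|X=i)\le n_{i\cdot}(s-1)$ with equality exactly when each row is a point mass, sum over rows, and subtract the child chi-square $\chi^2(Y)$, which is fixed by the column marginals, to obtain $ns\bigl(1-\sum_j q_j^2\bigr)$. Your explicit observations that $\chi^2(Y)$ is a constant under the stated hypotheses (so the optimization decouples across rows) and that a feasible maximizer can be exhibited are welcome clarifications but do not change the argument.
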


\begin{proof}
Applying Proposition~\ref{l:ub} to the row conditional chi-square statistic, we obtain 
\begin{align}
\chi^2(Y|X=i) &= \sum_{j=1}^s \frac{(n_{ij}-n_{i\cdot}/{s})^2}{{n_{i\cdot}}/{s}} \\
& \le \frac{(n_{i\cdot}-n_{i\cdot}/{s})^2 + (s-1)(n_{i\cdot}/{s})^2}{{n_{i\cdot}}/{s}}\\
& = n_{i\cdot} (s-1)
\end{align}
where the equality holds true if and only if there is one non-zero entry in the row if the table does not violate the row and column sums.  When such a single non-zero entry exists for every row, it implies that $Y$ is a function of $X$.

Since $q=[q_1,\ldots,q_s]$ is the observed marginal distribution of $Y$, we have $q_j=n_{\cdot j}/n$.  Plugging the upper bound of each row chi-square to the definition of functional chi-square, we obtain 
\begin{align}
\chi^2(f:X\to Y) &= \sum_{i=1}^r \chi^2(Y|X=i) - \chi^2(Y) \\
& \le n(s-1) - \sum_{j=1}^s \frac{(n_{\cdot j}-n/s)^2}{n/s}\\
& = ns\left(1-\sum_{j=1}^s q_j^2\right)
\end{align}
which bounds the functional chi-square statistic from above.
\end{proof}

\section{Using \textsc{FunChisq} software}

\label{AppUsage}

\subsection{Discretization}

As \textsc{FunChisq} uses a discrete nonparametric representation for interactions, continuous data must be quantized first.  The guideline for discretization is to preserve qualitative trends in the data but eliminate noisy fluctuations.  Here we describe a procedure that we used in analyzing DREAM5 data sets.

We first use R package \texttt{mclust} \citep{Fraley2003} to determine for each variable the number of quantization levels needed.  The only parameter needed here is the maximum possible number of peaks of each continuous random variable.  A program in the \texttt{mclust} package estimates the actual number of peaks $k$ specific to each variable using a Gaussian mixture model that optimizes the Bayesian information criterion.

We further discretize each variable independently using a $k$-means method -- exact and optimal for a single variable -- that we have developed and implemented as R package Ckmeans.1d.dp \citep{Wang2011}.

\subsection{Implementation of the functional chi-square test in R}

An R implementation of the functional chi-square test is given below as function \texttt{fun.chisq.test()}.  The input is a matrix of nonnegative values representing a contingency table $x$ and the output is a list of the functional chi-square statistic, the degrees of freedom, and the $p$-value, associated with the given contingency table $x$. 

\begin{Verbatim}[numbers=left,label=The functional chi-square test,frame=lines]
fun.chisq.test <- function (x)
{
  row.chisq.sum <- sum(apply(x, 1, 
                             function(v){ 
                               if(sum(v)>0) chisq.test(v)$statistic
                               else 0
                               } 
                             )
                       )
  
  fun.chisq <- row.chisq.sum - chisq.test( apply(x, 2, sum) )$statistic

  df <- nrow(x) * (ncol(x) - 1) - (ncol(x) - 1)
  p.value <- pchisq(fun.chisq, df = df, lower.tail=FALSE)

  return( list( statistic=fun.chisq, parameter=df, p.value=p.value ) )
}
\end{Verbatim}

The following R code reproduces the example used in Fig.~\ref{fig:ex}, by calling \texttt{fun.chisq.test()} and also the Pearson's chi-square test function \texttt{chisq.test()}.

\begin{Verbatim}[frame=lines,label=Example,numbers=left]
w <- matrix(c(5,1,5,1,5,1,1,0,1), nrow=3)
u <- t(w)

w.fun <- fun.chisq.test(w)
u.fun <- fun.chisq.test(u)

cat("Functional chisq(w):\t", w.fun$statistic, "\t", w.fun$parameter, "\t", 
	w.fun$p.value, "\n")
cat("Functional chisq(u):\t", u.fun$statistic, "\t", u.fun$parameter, "\t", 
	u.fun$p.value, "\n")

w.pearson <- chisq.test(w)
cat("Pearson chisq(w or u):\t", w.pearson$statistic, "\t", 
	w.pearson$parameter, "\t", w.pearson$p.value, "\n")
\end{Verbatim}

The output on the screen as generated by the above example code is given as follows:
\begin{Verbatim}[frame=lines,label=Output]
Functional chisq(w):	 10.04286 	 4 	 0.03971191 
Functional chisq(u):	 8.380519 	 4 	 0.07859274 
Pearson chisq(w or u):	 8.868275 	 4 	 0.06447766 
\end{Verbatim}

\end{document}